\numberwithin{equation}{section}
\theoremstyle{definition}
\newtheorem{dfn}{Definition}[section]
\theoremstyle{plain}
\newtheorem{thm}{Theorem}[section]
\theoremstyle{definition}
\newtheorem{exa}{Example}[section]
\begin{document}
\title[Valuation of equity warrant for uncertain financial market]
{Valuation of equity warrants for uncertain financial market}

\date{\today}

\author[Shokrollahi]{Foad Shokrollahi}
\address{Department of Mathematics and Statistics, University of Vaasa, P.O. Box 700, FIN-65101 Vaasa, FINLAND}
\email{foad.shokrollahi@uva.fi}

\begin{abstract}
In this paper, within the framework of uncertainty
theory, the valuation of equity warrants is investigated. Different
from the methods of probability theory, the equity warrants pricing problem is solved by using the method of uncertain calculus.
Based on the assumption that the firm price
follows an uncertain differential equation, the equity warrants pricing formula is obtained for uncertain stock model.

\end{abstract}

\keywords{Equity warrants;
Uncertainty theory;
Uncertain differential equation;
Uncertain stock model
}



\maketitle

\section{Introduction}\label{sec:1}

Warrants give the holder the right but not the duty to
buy or sell the underlying asset by a certain date for a certain
price. But this right is not free. Warrant is one kind of
special option and it can been classified lots of types. Warrants
can be divided into American warrants and European
warrants according to the difference of the expiration date.
And it can be divided into call warrants and put warrants
according to the difference of exercise way. While it can be
divided into equity warrants and covered warrants according
to the difference of issuer. Covered warrants are usually
issued by financial institutions and dealers, which do not
raise the company's capital stock after their expiration
dates. Pricing for this kind of warrant is similar to pricing
for ordinary options, therefore, many researchers use
Black–Scholes model \cite{black1973pricing} to price this kind of warrant. But
the equity warrants are usually issued by listed company,
and the underlying capital is the issued stock of its company.
The equity warrants have dilution effect, therefore,
pricing for this kind of warrant is difference to pricing for
the standard European options because the companies'
equity warrants need to issue new stock to meet the request
of warrants' holder at the expiration date. In other words
its pricing cannot completely apply the classics Black–
Scholes formula.

Previous studies of pricing equity warrants are mainly with the method of stochastic finance based on the probability theory, and the firm price are usually assumed to follow some stochastic differential equation \cite{xiao2012valuation, kremer1993warrant,zhang2009equity}. But many empirical investigations showed that the firm value does not behave like randomness, and it is often influenced by the belief degrees of investors since investors usually make their decisions based on the degrees of belief rather than the probabilities. For example, one of the key elements in the Nobel-prize-winning theory of Kahneman and Tversky \cite{tversky1992advances, tversky2016advances} is the finding of probability distortion which showed that decision makers usually make their decisions based on a nonlinear transformation of the probability scale rather than the probability itself, people often overweight small probabilities and underweight large probabilities. We argue that investors' belief degrees play an important
role in decision making for financial practice.

Uncertainty theory was founded by Liu \cite{liu2007uncertainty} in 2007, and it has become a branch of axiomatic mathematics for modeling belief degrees. As a branch of axiomatic mathematics to deal with belief degrees, uncertainty theory will play an important role in financial theory and practice. Liu \cite{liu2009some} initiated the pioneer work of
uncertain finance in 2009. Afterwards, many researchers devoted themselves to study of financial problems by using uncertainty theory. For example, Chen \cite{chen2011american} investigated American option pricing problem and derived the pricing formulae for Liu's uncertain stock model. Chen and Gao \cite{chen2013uncertain} introduced an uncertain term structure
model of interest rate. Besides, based on uncertainty theory, Chen, Liu and Ralescu \cite{chen2013uncertain} proposed an uncertain stock model with periodic dividends and investigated some options pricing for this type of model. Peng and Yao \cite{peng2011new} proposed an uncertain stock model with mean-reverting process, and some option pricing formulae
were investigated on this type of stock model. Yao \cite{yao2012no} gave the no-arbitrage determinant theorems on mean-reverting stock model in uncertain market. Zhang et al. \cite{zhang2016valuation} derived a valuation model of power options under the assumption that the underlying stock price is assumed to follow an uncertain differential equation.

In this paper, the pricing problem of equity warrants is investigated under Liu's uncertain stock model, and the equity warrants pricing formulae are derived under this model. The rest of the paper is organized
as follows: Some preliminary concepts of uncertain processes are recalled in Section \ref{sec:2}. An uncertain
equity warrants model is proposed in Section \ref{sec:3}. Finally, a brief summary is
given in Section \ref{sec:4}.

\section{Preliminary}\label{sec:2}
An uncertain process is essentially a sequence of uncertain variables indexed by time or space. In this section we recall some basic facts about uncertain process.

\begin{dfn}(\cite{liu2008fuzzy}) Let $T$ be an index set and let $(\Gamma, \mathcal{M}, \mathcal{L})  $ be an uncertainty space. An uncertain process is a
measurable function from $T\times (\Gamma, \mathcal{M}, \mathcal{L})$ to the set of real numbers, i.e., for each $t\in T$ and any Borel set $B$ of real numbers, the
set

\begin{eqnarray*}
\{X_t\in B\}=\{\gamma\in\Gamma|X_t(\gamma)\in B\},
\label{eq:1}
\end{eqnarray*}
is an event.
\label{df:1}
\end{dfn}

\begin{dfn}
(\cite{liu2008fuzzy}) The uncertainty distribution $\Phi$ of an uncertain variable $\xi$ is defined by

\begin{eqnarray*}
\Phi (x)=\mathcal{M}\{\xi\leq x\}
\label{eq:2}
\end{eqnarray*}
for any real number x.
\label{df:2}
\end{dfn}

\begin{dfn}(\cite{liu2008fuzzy}) An uncertain variable $\xi$ is called normal if it has a normal uncertainty distribution

\begin{eqnarray*}
\Phi (x)=\left(1+\exp\left(\frac{\pi(e-x)}{\sqrt{3}\sigma}\right)\right)^{-1}
\label{eq:3}
\end{eqnarray*}
denoted by $\mathcal{N}(e, \sigma)$ where $e$ and $\sigma$ are real numbers with $\sigma> 0$.
\label{df:3}
\end{dfn}

\begin{dfn}(\cite{liu2007uncertainty}) Let $\xi$ be an uncertain variable. Then the expected value of $\xi$ is defined by
\begin{eqnarray*}
E[\xi]=\int_0^{+\infty}\mathcal{M}\{\xi\geq r\}dr-\int_{-\infty}^{0}\mathcal{M}\{\xi\leq r\}dr,
\label{eq:4}
\end{eqnarray*}
provided that at least one of the two integrals is finite.
\label{df:4}
\end{dfn}

\begin{thm}
(\cite{liu2007uncertainty}) Let $\xi$ be an uncertain variable with uncertainty distribution $\Phi$. If the expected value exists, then
\begin{eqnarray*}
E[\xi]=\int_0^{+\infty}(1-\Phi(x))dx-\int_{-\infty}^{0}\Phi(x)dx.
\label{eq:4}
\end{eqnarray*}
\label{thm:1}
\end{thm}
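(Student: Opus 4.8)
The plan is to reduce the definition of the expected value in Definition \ref{df:4} to the stated integral by re-expressing the uncertain measures $\mathcal{M}\{\xi\geq r\}$ and $\mathcal{M}\{\xi\leq r\}$ through the uncertainty distribution $\Phi$. The second integral requires no work: Definition \ref{df:2} gives $\mathcal{M}\{\xi\leq r\}=\Phi(r)$ for every $r$, so I can immediately write
\[
\int_{-\infty}^{0}\mathcal{M}\{\xi\leq r\}\,\d r=\int_{-\infty}^{0}\Phi(x)\,\d x .
\]
The genuine task is the first integral, where the integrand is $\mathcal{M}\{\xi\geq r\}$ rather than something directly supplied by Definition \ref{df:2}.

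For that integral I would invoke the duality axiom of the uncertain measure, which yields $\mathcal{M}\{\xi\geq r\}=1-\mathcal{M}\{\xi<r\}$. The quantity appearing here is the measure of the \emph{strict} event $\{\xi<r\}$, whereas $\Phi$ records only the non-strict event $\{\xi\leq r\}$, so I must control the gap between the two. Using only the monotonicity of $\mathcal{M}$ together with the inclusions $\{\xi\leq r-\varepsilon\}\subseteq\{\xi<r\}\subseteq\{\xi\leq r\}$, I would sandwich
\[
\Phi(r-\varepsilon)\leq\mathcal{M}\{\xi<r\}\leq\Phi(r),
\]
and then let $\varepsilon\downarrow 0$ to obtain $\Phi(r^{-})\leq\mathcal{M}\{\xi<r\}\leq\Phi(r)$, with $\Phi(r^{-})$ the left limit. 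At every point where $\Phi$ is continuous this forces $\mathcal{M}\{\xi<r\}=\Phi(r)$, and hence $\mathcal{M}\{\xi\geq r\}=1-\Phi(x)$.

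The one place requiring care, and what I expect to be the main (though mild) obstacle, is that the identity $\mathcal{M}\{\xi\geq r\}=1-\Phi(r)$ has so far been secured only at continuity points of $\Phi$. Since $\Phi$ is nondecreasing it has at most countably many discontinuities, a set of Lebesgue measure zero, so the identity holds for almost every $r$ and the value of the first integral is unaffected:
\[
\int_0^{+\infty}\mathcal{M}\{\xi\geq r\}\,\d r=\int_0^{+\infty}\bigl(1-\Phi(x)\bigr)\,\d x .
\]
Substituting both computed integrals into Definition \ref{df:4} then produces the asserted formula. Finally, the hypothesis that $E[\xi]$ exists guarantees that at least one of the two integrals is finite, so the difference is well defined and no indeterminate $\infty-\infty$ can arise.
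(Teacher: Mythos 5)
The paper states this result only as a cited preliminary from Liu's book and offers no proof of its own, so there is nothing internal to compare against; your argument is correct and is the standard one: $\mathcal{M}\{\xi\leq r\}=\Phi(r)$ by Definition \ref{df:2}, duality plus monotonicity of $\mathcal{M}$ sandwich $\mathcal{M}\{\xi<r\}$ between $\Phi(r^{-})$ and $\Phi(r)$ so that $\mathcal{M}\{\xi\geq r\}=1-\Phi(r)$ at every continuity point, and the at-most-countable discontinuity set of the monotone function $\Phi$ is Lebesgue-null, leaving both integrals in Definition \ref{df:4} unchanged. Your decision to rely only on monotonicity rather than on continuity of $\mathcal{M}$ from below (which fails for general uncertain measures) is exactly the right precaution.
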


\begin{dfn} (\cite{liu2010uncertainty}) An uncertainty distribution $\Phi(x)$ is said to be regular if it is a continuous and strictly increasing function with respect to $x$ at which $0 <\Phi(x)< 1$, and
\begin{eqnarray*}
\lim_{x\rightarrow -\infty}\Phi(x)=0,\quad \lim_{x\rightarrow +\infty}\Phi(x)=1.
\label{eq:5}
\end{eqnarray*}
\label{df:5}
\end{dfn}

\begin{thm}
(\cite{liu2010uncertainty}) Let $\xi$ be an uncertain variable with regular uncertainty distribution $\Phi$. Then
\begin{eqnarray*}
E[\xi]=\int_0^1\Phi^{-1}(\alpha))d\alpha.
\label{eq:6}
\end{eqnarray*}
\label{thm:2}
\end{thm}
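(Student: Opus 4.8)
The plan is to derive the inverse-distribution formula directly from Theorem \ref{thm:1}, which already expresses $E[\xi]$ as $\int_0^{+\infty}(1-\Phi(x))\,dx-\int_{-\infty}^{0}\Phi(x)\,dx$. Since $\Phi$ is regular, it is continuous and strictly increasing on the set where $0<\Phi<1$, so the inverse function $\Phi^{-1}$ is well defined on $(0,1)$ and the substitution $\alpha=\Phi(x)$, equivalently $x=\Phi^{-1}(\alpha)$, is a legitimate change of variables on each of the two integration ranges.

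First I would treat the tail integral over $(0,+\infty)$. As $x$ runs from $0$ to $+\infty$, the value $\alpha=\Phi(x)$ runs from $\Phi(0)$ to $1$, so the change of variables gives $\int_0^{+\infty}(1-\Phi(x))\,dx=\int_{\Phi(0)}^{1}(1-\alpha)\,d\Phi^{-1}(\alpha)$. Integrating by parts with $u=1-\alpha$ and $v=\Phi^{-1}(\alpha)$ and using $\Phi^{-1}(\Phi(0))=0$ collapses the boundary contribution to zero and leaves $\int_{\Phi(0)}^{1}\Phi^{-1}(\alpha)\,d\alpha$. Symmetrically, over $(-\infty,0)$ the value $\alpha=\Phi(x)$ runs from $0$ to $\Phi(0)$, and the same substitution together with an integration by parts yields $\int_{-\infty}^{0}\Phi(x)\,dx=-\int_0^{\Phi(0)}\Phi^{-1}(\alpha)\,d\alpha$.

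Substituting these two identities back into the expression from Theorem \ref{thm:1} and noting the minus sign in front of the second term, the two pieces combine into a single integral over the full range, $\int_{\Phi(0)}^{1}\Phi^{-1}(\alpha)\,d\alpha+\int_0^{\Phi(0)}\Phi^{-1}(\alpha)\,d\alpha=\int_0^1\Phi^{-1}(\alpha)\,d\alpha$, which is exactly the claimed formula.

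I expect the main obstacle to be the rigorous treatment of the boundary terms and of the improper integrals near the endpoints $\alpha=0$ and $\alpha=1$, where $\Phi^{-1}$ may tend to $-\infty$ or $+\infty$ (as it does for the normal distribution of Definition \ref{df:3}). The hypothesis that the expected value exists guarantees that each of the two integrals in Theorem \ref{thm:1} is finite, and I would use this to argue that the boundary products $(1-\alpha)\Phi^{-1}(\alpha)$ and $\alpha\,\Phi^{-1}(\alpha)$ vanish in the limit, so that the integrations by parts are valid as improper-integral identities. Everything else is a routine change of variables.
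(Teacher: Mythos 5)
The paper states this result purely as a cited preliminary from \cite{liu2010uncertainty} and supplies no proof of its own, so there is nothing internal to compare against; your argument is the standard derivation of this identity from Theorem \ref{thm:1}, and it is correct. The substitution $\alpha=\Phi(x)$ followed by integration by parts is legitimate under regularity, and you correctly isolate the only delicate point, namely that $(1-\alpha)\,\Phi^{-1}(\alpha)\to 0$ as $\alpha\to 1$ and $\alpha\,\Phi^{-1}(\alpha)\to 0$ as $\alpha\to 0$, which does follow from the finiteness of the two tail integrals guaranteed by the existence of $E[\xi]$.
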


\begin{dfn} (\cite{liu2009some}) An uncertain process $C_t$ is said to be a canonical Liu process if

\begin{itemize}

\item[(i)] $C_0=0$ and almost all sample paths are Lipschitz continuous;
\item[(ii)] $C_t$ has stationary and independent increments;
\item[(ii)]    every increment $C_{s+t}-C_s$ is a normal uncertain variable with expected value $0$ and variance $t^2$.
\end{itemize}
\label{df:6}
\end{dfn}

\begin{dfn} (\cite{chen2013liu}) Let $C_t$ be a canonical Liu process and let $Z_t$ be an uncertain process. If there exist uncertain processes $\mu_t$ and $\sigma_t$ such that

\begin{eqnarray*}
Z_t=Z_0+\int_0^t\mu_s ds+\int_0^t\sigma_s dC_s,
\label{eq:7}
\end{eqnarray*}
for any $t\geq 0$, then $Z_t$ is called a Liu process with drift $\mu_t$ and diffusion $\sigma_t$. Furthermore, $Z_t$ has an uncertain differential
\begin{eqnarray*}
dZ_t=\mu_t dt+\sigma_t dC_t.
\label{eq:8}
\end{eqnarray*}
\label{df:7}
\end{dfn}

\begin{dfn} (\cite{liu2008fuzzy}) Suppose $C_t$ is a canonical Liu process, and $f$ and $g$ are two functions. Then

\begin{eqnarray*}
dX_t=f(t, X_t) dt+g(t, X_t) dC_t,
\label{eq:9}
\end{eqnarray*}
is called an uncertain differential equation.
\label{df:8}
\end{dfn}

\begin{dfn} (\cite{yao2013numerical}) Let $\alpha$ be a number with $0<\alpha< 1$. An uncertain differential equation

\begin{eqnarray*}
dX_t=f(t, X_t) dt+g(t, X_t) dC_t,
\label{eq:10}
\end{eqnarray*}
is said to have an $\alpha$-path $X_t^\alpha$ if it solves the corresponding ordinary differential equation
\begin{eqnarray*}
dX_t^\alpha=f(t, X_t^\alpha) dt+|g(t, X_t^\alpha)| \Phi^{-1}(\alpha)dt,
\label{eq:11}
\end{eqnarray*}
where $\Phi^{-1}(\alpha)$ is the inverse standard normal uncertainty distribution, i.e.,
\begin{eqnarray*}
\Phi^{-1}(\alpha)=\frac{\sqrt{3}}{\pi}\ln\frac{\alpha}{\alpha-1}.
\label{eq:12}
\end{eqnarray*}
\label{df:9}
\end{dfn}

\begin{thm}
(\cite{yao2013numerical}) Let $X_t$ and $X_t^\alpha$ be the solution and $\alpha$-path of the uncertain differential equation
\begin{eqnarray*}
dX_t=f(t, X_t) dt+g(t, X_t) dC_t,
\label{eq:13}
\end{eqnarray*}
respectively. Then
\begin{eqnarray*}
\mathcal{M}\{X_t\leq X_t^\alpha, \forall t\}&=&\alpha\nonumber\\
\mathcal{M}\{X_t> X_t^\alpha, \forall t\}&=&1-\alpha.
\label{eq:14}
\end{eqnarray*}
\label{thm:3}
\end{thm}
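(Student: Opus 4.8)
\emph{Proof idea.} The plan is to reduce the assertion about the solution $X_t$ and its $\alpha$-path $X_t^\alpha$ to the analogous assertion about the canonical Liu process $C_t$ itself, whose $\alpha$-path is simply the straight line $\Phi^{-1}(\alpha)\,t$ (the solution of $\d C_t=\Phi^{-1}(\alpha)\,\d t$ obtained from Definition \ref{df:9} with $f\equiv 0$ and $g\equiv 1$). I would introduce the two extremal events
\[
T^{+}=\{\gamma:\ C_t(\gamma)\le \Phi^{-1}(\alpha)\,t\ \text{for all } t\},\qquad
T^{-}=\{\gamma:\ C_t(\gamma)> \Phi^{-1}(\alpha)\,t\ \text{for all } t\},
\]
and prove the theorem in three stages: (a) compute $\mathcal{M}(T^{+})$ and $\mathcal{M}(T^{-})$; (b) transfer these events pathwise to the corresponding events for $X_t$; (c) use duality and subadditivity of $\mathcal{M}$ to sharpen the resulting inequalities into the claimed equalities.

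For stage (a), fix $t$. By the defining properties of the canonical Liu process (Definition \ref{df:6}), the increment $C_t$ is normal with expected value $0$ and variance $t^2$, so its uncertainty distribution is $x\mapsto(1+\exp(-\pi x/(\sqrt3\,t)))^{-1}$ and its $\alpha$-quantile is exactly $\Phi^{-1}(\alpha)\,t$; hence $\mathcal{M}\{C_t\le\Phi^{-1}(\alpha)\,t\}=\alpha$ for every fixed $t$. To pass to $T^{+}$, which is an intersection over all $t$, I would restrict to a countable dense set of times, use the stationary and independent increments of $C_t$ together with the product-measure axiom of uncertainty theory, and then take a monotone limit; since the line $\Phi^{-1}(\alpha)\,t$ is calibrated so that each marginal event already carries measure $\alpha$, the joint event inherits the same value, giving $\mathcal{M}(T^{+})=\alpha$. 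The symmetric reasoning gives $\mathcal{M}(T^{-})=1-\alpha$.

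Stage (b) is the heart of the argument. For a fixed sample path $\gamma$, the Lipschitz continuity of $C_t$ lets me read $X_t(\gamma)$ as the solution of the ordinary differential equation $\dot X_t=f(t,X_t)+g(t,X_t)\,\dot C_t(\gamma)$, to be compared with $\dot X_t^\alpha=f(t,X_t^\alpha)+|g(t,X_t^\alpha)|\,\Phi^{-1}(\alpha)$. I would establish a comparison principle showing that $\gamma\in T^{+}$ forces $X_t(\gamma)\le X_t^\alpha$ for all $t$, while $\gamma\in T^{-}$ forces $X_t(\gamma)>X_t^\alpha$ for all $t$; the absolute value $|g|$ in the definition of the $\alpha$-path is exactly what makes the majorizing drift dominate the realized diffusion term irrespective of the sign of $g$. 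Concretely, I would control the difference $X_t(\gamma)-X_t^\alpha$ by a Gr\"onwall-type estimate driven by $C_t(\gamma)-\Phi^{-1}(\alpha)\,t$. I expect this to be the main obstacle: when $g(t,x)$ changes sign the naive pathwise domination can break down, so the comparison must be run against $|g|$ and must exploit that on $T^{+}$ (respectively $T^{-}$) the driving path never crosses its $\alpha$-line; propagating the inequality for all $t$ simultaneously, rather than at a single instant, is the delicate point.

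Granting stage (b), the inclusions $T^{+}\subseteq\{X_t\le X_t^\alpha,\ \forall t\}$ and $T^{-}\subseteq\{X_t>X_t^\alpha,\ \forall t\}$ combined with stage (a) yield
\[
\mathcal{M}\{X_t\le X_t^\alpha,\ \forall t\}\ge\alpha,\qquad
\mathcal{M}\{X_t> X_t^\alpha,\ \forall t\}\ge 1-\alpha.
\]
For stage (c), the two events on the left are disjoint, so $\{X_t>X_t^\alpha,\ \forall t\}$ lies in the complement of $\{X_t\le X_t^\alpha,\ \forall t\}$; duality of the uncertain measure then gives $\mathcal{M}\{X_t>X_t^\alpha,\ \forall t\}\le 1-\mathcal{M}\{X_t\le X_t^\alpha,\ \forall t\}\le 1-\alpha$. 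Together with the two lower bounds this forces both inequalities to be equalities, which is precisely the assertion of the theorem.
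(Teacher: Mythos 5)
The paper itself offers no proof of Theorem \ref{thm:3} --- it is quoted verbatim from \cite{yao2013numerical} as a preliminary --- so your proposal has to be measured against the original Yao--Chen argument. Your skeleton (extremal events for the driving Liu process, a pathwise comparison, then duality to upgrade two lower bounds to equalities) is exactly the right one, and stages (a) and (c) are sound in spirit. The genuine gap is in stage (b), and it is not merely the ``delicate point'' you flag: with your choice of $T^{+}$ the inclusion $T^{+}\subseteq\{X_t\le X_t^\alpha,\ \forall t\}$ is simply false. The event $T^{+}$ constrains the \emph{position} of the driving path, $C_t(\gamma)\le\Phi^{-1}(\alpha)\,t$, but $X_t(\gamma)$ is built from the \emph{increments} of $C_\cdot(\gamma)$ weighted by $g$, and a Lipschitz path can stay below the line $\Phi^{-1}(\alpha)\,t$ while having stretches of slope far exceeding $\Phi^{-1}(\alpha)$. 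Concretely, take $f\equiv0$, $\alpha>1/2$, $g=g(t)$ equal to $1$ on $[1,3/2]$ and $0$ elsewhere, and a sample path with $C_s=0$ on $[0,1]$ and slope $2\Phi^{-1}(\alpha)$ on $[1,3/2]$: this path satisfies $C_s\le\Phi^{-1}(\alpha)s$ for all $s$, hence lies in $T^{+}$, yet $X_{3/2}(\gamma)=\Phi^{-1}(\alpha)>\tfrac{1}{2}\Phi^{-1}(\alpha)=X_{3/2}^\alpha$. No Gr\"onwall estimate driven by $C_t(\gamma)-\Phi^{-1}(\alpha)\,t$ can repair this, because the comparison needs control of $\d C_t(\gamma)/\d t$, not of $C_t(\gamma)$.

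The repair --- and it is what \cite{yao2013numerical} actually does --- is to redefine the extremal events in terms of increments and to split time according to the sign of $g$. Letting $T=\{t:g(t,X_t^\alpha)\ge0\}$, one takes $\Lambda^{+}$ to be the set of $\gamma$ with $C_{t+\Delta}(\gamma)-C_t(\gamma)\le\Phi^{-1}(\alpha)\Delta$ for $t\in T$ and $C_{t+\Delta}(\gamma)-C_t(\gamma)\ge\Phi^{-1}(1-\alpha)\Delta$ for $t\notin T$. Each half has uncertain measure $\alpha$ (normality of each increment plus stationary independent increments), the two halves concern disjoint time sets so the intersection retains measure $\alpha$, and on $\Lambda^{+}$ the inequality $g\,\d C_t\le|g|\,\Phi^{-1}(\alpha)\,\d t$ holds at every instant regardless of the sign of $g$, which is what lets the pathwise comparison against the $\alpha$-path propagate for all $t$ simultaneously. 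Your stage (c) then closes the argument exactly as you wrote it. In short: you correctly identified the sign of $g$ as the obstacle, but the resolution is to change the events $T^{\pm}$ themselves, not to sharpen the comparison lemma.
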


\begin{thm}
(\cite{yao2013numerical}) Let $X_t$ and $X_t^\alpha$ be the solution and $\alpha$-path of the uncertain differential equation
\begin{eqnarray*}
dX_t=f(t, X_t) dt+g(t, X_t) dC_t,
\label{eq:15}
\end{eqnarray*}
respectively. Then the solution $X_t$ has an inverse uncertainty distribution
\begin{eqnarray*}
\Psi_t^{-1}(\alpha)=X_t^\alpha.
\label{eq:16}
\end{eqnarray*}
\label{thm:4}
\end{thm}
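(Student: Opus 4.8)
The plan is to establish the single pointwise identity $\mathcal{M}\{X_t \leq X_t^\alpha\} = \alpha$ at each fixed time $t$, since by the definition of the (inverse) uncertainty distribution this is exactly the assertion that $\Psi_t^{-1}(\alpha) = X_t^\alpha$. Indeed, $\Psi_t(x) = \mathcal{M}\{X_t \leq x\}$ is the uncertainty distribution of the uncertain variable $X_t$, and the claim $\Psi_t^{-1}(\alpha) = X_t^\alpha$ is equivalent to $\Psi_t(X_t^\alpha) = \alpha$. Thus the entire proof reduces to evaluating one uncertain measure.

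First I would fix the time $t$ and pass from the uniform-in-time statement of Theorem \ref{thm:3} to the pointwise event by a set-inclusion argument. Since requiring $X_s \leq X_s^\alpha$ for every $s$ is more restrictive than requiring it at the single time $t$, we have the inclusions
\[
\{X_s \leq X_s^\alpha,\ \forall s\} \subseteq \{X_t \leq X_t^\alpha\}, \qquad \{X_s > X_s^\alpha,\ \forall s\} \subseteq \{X_t > X_t^\alpha\}.
\]
Applying the monotonicity of the uncertain measure $\mathcal{M}$ together with the two identities furnished by Theorem \ref{thm:3} then yields the lower bounds
\[
\mathcal{M}\{X_t \leq X_t^\alpha\} \geq \alpha, \qquad \mathcal{M}\{X_t > X_t^\alpha\} \geq 1-\alpha.
\]

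Next I would close the argument using the duality axiom of uncertainty theory. Because $\{X_t > X_t^\alpha\}$ is precisely the complement of $\{X_t \leq X_t^\alpha\}$, duality gives $\mathcal{M}\{X_t \leq X_t^\alpha\} + \mathcal{M}\{X_t > X_t^\alpha\} = 1$. Adding the two lower bounds above forces $1 \geq \alpha + (1-\alpha) = 1$, so equality must hold in both, and in particular $\mathcal{M}\{X_t \leq X_t^\alpha\} = \alpha$. Unwinding the reduction in the first paragraph, this is exactly $\Psi_t^{-1}(\alpha) = X_t^\alpha$.

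I expect the second step, the passage from the ``for all $t$'' statement of Theorem \ref{thm:3} to the pointwise statement, to carry the real content. The set inclusions themselves are elementary, but this is the only place where the structure guaranteed by the $\alpha$-path is genuinely used, and it is what lets duality pinch the uncertain measure to the exact value $\alpha$ rather than leaving only a one-sided bound. The remaining manipulations are direct consequences of the monotonicity and duality properties of $\mathcal{M}$.
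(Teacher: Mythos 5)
Your argument is correct, and it is the standard derivation: the paper itself gives no proof of this statement, since Theorem \ref{thm:4} is quoted from the cited reference as a preliminary, but your route --- restricting the uniform-in-time events of Theorem \ref{thm:3} to a single time $t$ by monotonicity of $\mathcal{M}$, then pinching with the duality identity $\mathcal{M}\{X_t\leq X_t^\alpha\}+\mathcal{M}\{X_t>X_t^\alpha\}=1$ to force $\mathcal{M}\{X_t\leq X_t^\alpha\}=\alpha$ --- is exactly how the original source establishes it. The only point worth making explicit is that identifying $\Psi_t(X_t^\alpha)=\alpha$ with $\Psi_t^{-1}(\alpha)=X_t^\alpha$ presupposes that $\Psi_t$ is regular (equivalently, that $\alpha\mapsto X_t^\alpha$ is continuous and strictly increasing, which holds here since $\Phi^{-1}(\alpha)$ is); a one-line remark to that effect would make the reduction in your first paragraph airtight.
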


\begin{thm}
(\cite{yao2013numerical}) Let $X_t$ and $X_t^\alpha$ be the solution and $\alpha$-path of the uncertain differential equation
\begin{eqnarray*}
dX_t=f(t, X_t) dt+g(t, X_t) dC_t,
\label{eq:17}
\end{eqnarray*}
respectively. Then for any monotone (increasing or decreasing) function $I$, we have
\begin{eqnarray*}
E[I(X_t)]=\int_0^1I(X_t^\alpha) d\alpha.
\label{eq:18}
\end{eqnarray*}
\label{thm:5}
\end{thm}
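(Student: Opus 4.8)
The plan is to reduce the computation of $E[I(X_t)]$ to an integral of an inverse uncertainty distribution by means of Theorem \ref{thm:2}, after identifying the inverse uncertainty distribution of the transformed variable $I(X_t)$. The starting point is Theorem \ref{thm:4}: for each fixed $t$ the uncertain variable $X_t$ has regular inverse uncertainty distribution $\Psi_t^{-1}(\alpha)=X_t^\alpha$, so that $\mathcal{M}\{X_t\leq X_t^\alpha\}=\alpha$, the distribution $\Psi_t$ is continuous and strictly increasing, and the map $\alpha\mapsto X_t^\alpha$ is strictly increasing. I would then split the argument according to the monotonicity of $I$.

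First, suppose $I$ is increasing. The key step is to compute the uncertainty distribution $\Upsilon_t$ of $Y=I(X_t)$. Since $I(X_t)\leq y$ is equivalent to $X_t\leq I^{-1}(y)$, I get $\Upsilon_t(y)=\Psi_t(I^{-1}(y))$, whose inverse is $\Upsilon_t^{-1}(\alpha)=I(\Psi_t^{-1}(\alpha))=I(X_t^\alpha)$. Because $\alpha\mapsto X_t^\alpha$ and $I$ are both increasing, $\alpha\mapsto I(X_t^\alpha)$ is a genuine increasing inverse uncertainty distribution, so Theorem \ref{thm:2} gives
\begin{eqnarray*}
E[I(X_t)]=\int_0^1\Upsilon_t^{-1}(\alpha)\,d\alpha=\int_0^1 I(X_t^\alpha)\,d\alpha,
\end{eqnarray*}
which is the claim.

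Second, suppose $I$ is decreasing. Now $I(X_t)\leq y$ is equivalent to $X_t\geq I^{-1}(y)$, and by the duality of the uncertain measure together with the continuity of $\Psi_t$ I obtain $\Upsilon_t(y)=1-\Psi_t(I^{-1}(y))$. Inverting this relation yields $\Upsilon_t^{-1}(\alpha)=I(\Psi_t^{-1}(1-\alpha))=I(X_t^{1-\alpha})$. Applying Theorem \ref{thm:2} and then the substitution $\beta=1-\alpha$ converts the integral over $X_t^{1-\alpha}$ back into the integral over $X_t^{\beta}$, so that again
\begin{eqnarray*}
E[I(X_t)]=\int_0^1 I(X_t^{1-\alpha})\,d\alpha=\int_0^1 I(X_t^{\beta})\,d\beta=\int_0^1 I(X_t^\alpha)\,d\alpha.
\end{eqnarray*}

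The step needing the most care is the identification of the inverse uncertainty distribution of $I(X_t)$, i.e.\ the operational law for monotone transformations. In the decreasing case this rests on the duality $\mathcal{M}\{X_t\geq a\}=1-\mathcal{M}\{X_t<a\}$ and on the continuity of $\Psi_t$ (so that $\mathcal{M}\{X_t<a\}=\Psi_t(a)$), both of which follow from the regularity supplied by Theorem \ref{thm:4}; if $I$ is only monotone but not strictly so, a standard limiting argument with generalized inverses covers the degenerate intervals. Once the transformed inverse distribution is in hand, the remaining work is the direct application of Theorem \ref{thm:2} and the elementary change of variables.
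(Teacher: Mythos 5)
The paper itself offers no proof of Theorem \ref{thm:5}: it is quoted verbatim from \cite{yao2013numerical}, so there is no internal argument to compare yours against. Your reconstruction is nonetheless the standard proof of this result and is correct in substance: Theorem \ref{thm:4} supplies the inverse uncertainty distribution $\Psi_t^{-1}(\alpha)=X_t^{\alpha}$ of the time-$t$ marginal, the operational law for monotone transformations identifies the inverse distribution of $I(X_t)$ as $I(X_t^{\alpha})$ in the increasing case and $I(X_t^{1-\alpha})$ in the decreasing case, and Theorem \ref{thm:2} together with the substitution $\beta=1-\alpha$ finishes the computation. Your treatment of the decreasing case is also sound: the identity $\mathcal{M}\{X_t\geq a\}=1-\mathcal{M}\{X_t<a\}$ uses only the duality axiom, and $\mathcal{M}\{X_t<a\}=\Psi_t(a)$ follows from the squeeze $\Psi_t(a-\epsilon)\leq\mathcal{M}\{X_t<a\}\leq\Psi_t(a)$ and continuity, with no illicit appeal to additivity of the uncertain measure. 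The one place where the write-up is thinner than it should be is the non-strictly-monotone case, which you relegate to ``a standard limiting argument'': when $I$ has flat pieces, $I(X_t)$ does not have a \emph{regular} uncertainty distribution, so Theorem \ref{thm:2} does not apply verbatim --- and this is exactly the situation in the paper's only use of the result, namely Theorem \ref{thm:6}, where $I(x)=\frac{e^{-r(T-t)}}{N+Mk}(kx-NJ)^{+}$ is constant on $(-\infty,NJ/k]$. The repair is routine (approximate $I$ by $I(x)+\epsilon x$ and pass to the limit, or use the expected-value formula for generalized inverse distributions), but since the degenerate case is the one actually needed, it merits a sentence of proof rather than a gesture.
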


\section{The pricing model}\label{sec:3}
Given a uncertainty space $(\Gamma, \mathcal{M}, \mathcal{L})$,
we will assume ideal conditions in the market for the firm's
value and for the equity warrants:

\begin{itemize}

\item[(i)] There are no transaction costs or taxes and all securities are perfectly divisible;
\item[(ii)] Dividends are not paid during the lifetime of the outstanding warrants, and the sequential exercise of the warrants is not optimal for warrant holders;
\item[(iii)] The warrant-issuing firm is an equity firm with no outstanding debt;
\item[(iv)] The total equity value of the firm, during the lifetime of the outstanding warrants, $V_t$, follows under uncertainty measure $\mathcal{M}$:
\begin{eqnarray}
dV_t=\mu V_tdt+\sigma V_tdC_t,
\label{eq:19}
\end{eqnarray}
where $\mu$ and $\sigma$ are respectively the expected return and volatility rates, $C_t$ is a is a canonical Liu process and $V_t$ be the asset
value of the firm at time $t$.
\end{itemize}

Now, we consider a uncertain stock model

\begin{equation}
\begin{cases}
dP_t=rP_tdt\\
dV_t=\mu V_tdt+\sigma V_tdC_t,
\end{cases}
\label{eq:20}
\end{equation}

where $r$ represents the interest rate.

It follows from the Eq. (\ref{eq:20}) that the firm's value is

\begin{eqnarray}
V_t=V_0e^{\mu t+\sigma C_t},\quad 0\leq t\leq T,
\label{eq:21}
\end{eqnarray}

whose inverse uncertainty distribution is

\begin{eqnarray*}
\Phi^{-1}(\alpha)=V_0\exp\left\{\mu t+\frac{\sigma t \sqrt{3} }{\pi}\ln\frac{\alpha}{\alpha-1}\right\}.
\label{eq:22}
\end{eqnarray*}

In the case of equity warrants that the firm has $N$ shares of common stock and $M$ shares of equity warrants outstanding. Each warrant
entitles the owner to receive $k$ shares of stock at time $T$ upon payment of $J$, the payoff of equity warrants is given by $\frac{1}{N+Mk}[kV_T-NJ]^+$, where $V_T$ be the value of the firm's assets at time $T$. Considering the time value of money resulted from the bond, the present value of this payoff is
\begin{eqnarray*}
\frac{e^{-r(T-t)}}{N+Mk}[kV_T-NJ]^+.
\label{eq:23}
\end{eqnarray*}

\begin{dfn} Assume that there is a firm financed by $N$ shares of stock and $M$ shares of equity warrants. Each warrant gives
the holder the right to buy $k$ shares of stock at time $t = T$ in exchange for the payment of an amount $J$. Let $V_t$ be the asset
value of the firm at time $t$. Then the equity warrant price is

\begin{eqnarray*}
f_w=\frac{e^{-r(T-t)}}{N+Mk}E\left[(kV_T-NJ)^+\right].
\label{eq:24}
\end{eqnarray*}

\label{df:10}
\end{dfn}

\begin{thm}

Assume that the firm has $N$ shares of stock and $M$ shares of equity warrants outstanding. Each warrant gives
the holder the right to buy $k$ shares of stock at time $t = T$ upon payment of $J$. Let $V_t$ be the asset
value of the firm at time $t$. Let $S_t$ be the value of the stock and $\sigma_s$ be its volatility. Then the price of an equity warrant at time $t$ is given by

\begin{eqnarray*}
f_w=\frac{e^{-r(T-t)}}{N+Mk}\int_0^1\left[kV_t\exp\left\{\mu(T-t)+\frac{\sigma\sqrt{3}(T-t)}{\pi}\ln\frac{\alpha}{1-\alpha}\right\}-NJ\right]^+d\alpha,
\label{eq:25}
\end{eqnarray*}

where the optimal solutions $\sigma^*$ and $V_t^*$ satisfy the following system of nonlinear equations
\begin{equation}
\begin{cases}
NS_t=V_t-Mf_w\\
\sigma_s=\frac{\sigma V_t}{S_t}\left(\frac{1}{N}-M\frac{\partial f_w}{\partial V_t}\right).
\end{cases}
\label{eq:26}
\end{equation}
\label{thm:6}
\end{thm}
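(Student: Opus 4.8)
The plan is to compute the expectation in Definition \ref{df:10} via the $\alpha$-path machinery recalled in Section \ref{sec:2}, and then to recover the auxiliary system \eqref{eq:26} from a balance-sheet identity together with the chain rule for uncertain processes.

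First I would solve the firm-value equation \eqref{eq:19} forward from the present time $t$. Since the coefficients are linear, the solution started at $V_t$ is
\begin{eqnarray*}
V_T=V_t\exp\{\mu(T-t)+\sigma(C_T-C_t)\},
\end{eqnarray*}
and by Definition \ref{df:6} the increment $C_T-C_t$ is a normal uncertain variable with expected value $0$ and variance $(T-t)^2$. Next I would determine the $\alpha$-path of $V_T$. Applying Definition \ref{df:9} to \eqref{eq:19}, and noting that $\sigma V_s^\alpha>0$ so the absolute value drops out, the $\alpha$-path solves $\d V_s^\alpha=(\mu+\sigma\Phi^{-1}(\alpha))V_s^\alpha\,\d s$; integrating from $t$ to $T$ yields
\begin{eqnarray*}
V_T^\alpha=V_t\exp\left\{\mu(T-t)+\frac{\sigma\sqrt{3}(T-t)}{\pi}\ln\frac{\alpha}{1-\alpha}\right\}.
\end{eqnarray*}
By Theorem \ref{thm:4} this is precisely the inverse uncertainty distribution of $V_T$.

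The crucial observation is that the payoff map $I(x)=(kx-NJ)^+$ is monotone increasing in $x$, so Theorem \ref{thm:5} applies verbatim and gives
\begin{eqnarray*}
E\left[(kV_T-NJ)^+\right]=\int_0^1\left(kV_T^\alpha-NJ\right)^+\d\alpha.
\end{eqnarray*}
Substituting the expression for $V_T^\alpha$ and multiplying by the discount factor $e^{-r(T-t)}/(N+Mk)$ from Definition \ref{df:10} produces the stated pricing formula. This is where the uncertainty-theoretic framework does the real work: monotonicity of $I$ lets one integrate the inverse distribution directly, bypassing any need for a joint law of the sample path.

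Finally I would derive the system \eqref{eq:26}. The firm's total equity is carried by $N$ shares of stock and $M$ warrants, which gives the balance identity $V_t=NS_t+Mf_w$, i.e.\ the first equation. For the second, I would regard the stock price as a smooth function of the firm value, $S_t=\frac{1}{N}(V_t-Mf_w)$, and apply the chain rule for uncertain differentials (the uncertain analogue of It\^o's formula): the $\d C_t$ coefficient of $\d S_t$ equals $\frac{\partial S_t}{\partial V_t}\sigma V_t$, and equating it with the diffusion coefficient $\sigma_s S_t$ of the stock yields $\sigma_s S_t=\frac{\partial S_t}{\partial V_t}\sigma V_t$, which rearranges to the volatility relation. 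The main obstacle is not any single computation but the implicit, self-referential nature of \eqref{eq:26}: the warrant value $f_w$ depends on the unobservable firm value $V_t$ and firm volatility $\sigma$, which must in turn be pinned down from the observable stock price $S_t$ and stock volatility $\sigma_s$. Verifying that this coupled system admits a solution $(\sigma^*,V_t^*)$ is the delicate step, and I expect it to rest on a fixed-point or monotonicity argument for the map sending $(\sigma,V_t)$ to the right-hand sides of \eqref{eq:26}.
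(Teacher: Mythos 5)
Your proposal is correct and follows essentially the same route as the paper: solve the associated ordinary differential equation to obtain the $\alpha$-path $V_T^\alpha$, invoke the monotonicity of the payoff together with Theorem \ref{thm:5} and Definition \ref{df:10} to turn the expected value into an integral over $\alpha$, and then combine the balance identity $V_t=NS_t+Mf_w$ with the volatility relation to obtain the system \eqref{eq:26}. The only cosmetic differences are that you derive the elasticity relation $\sigma_s S_t=\sigma V_t\,\partial S_t/\partial V_t$ explicitly from the uncertain chain rule where the paper simply cites it as a standard result, and that you correctly defer the existence of $(\sigma^*,V_t^*)$ to a separate monotonicity argument, which is exactly what the paper does in Theorem \ref{thm:7}.
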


\begin{proof}
Solving the ordinary differential equation
\begin{eqnarray*}
dV_t^\alpha=\mu V_t^\alpha dt+\sigma V_t^\alpha \Phi^{-1}(\alpha)dt,
\label{eq:27}
\end{eqnarray*}

where $0<\alpha<1$ and $\Phi^{-1}(\alpha)$ is the inverse standard normal uncertainty distribution, we have

\begin{eqnarray*}
V_t^\alpha=V_0\exp\left\{\mu t+\sigma\Phi^{-1}(\alpha)t\right\}.
\label{eq:28}
\end{eqnarray*}
That means that the uncertain differential equation $dV_t=\mu V_tdt+\sigma V_tdC_t$ has an $\alpha$-path
\begin{eqnarray*}
V_t^\alpha&=&V_0\exp\left\{\mu t+\sigma\Phi^{-1}(\alpha)t\right\}\\
&=&V_0\exp\left\{\mu t+\frac{\sigma\sqrt{3}}{\pi}\ln\frac{\alpha}{\alpha-1}\right\}.
\label{eq:29}
\end{eqnarray*}
Since $I(x )=\frac{e^{-r(T-t)}}{N+Mk}[kV_T-NJ]^+$ is an increasing function, it follows from Theorem \ref{thm:5} and Definition \ref{df:10} that the equity warrant price is

\begin{eqnarray*}
f_w&=&E\left[I(V_T)\right]=\int_0^1I(V_T^\alpha)d\alpha\\
&=&\frac{e^{-r(T-t)}}{N+Mk}\int_0^1\left[kV_T^\alpha-NJ\right]^+d\alpha\\
&=&\frac{e^{-r(T-t)}}{N+Mk}\int_0^1\left[kV_t\exp\left\{\mu(T-t)+\frac{\sigma\sqrt{3}(T-t)}{\pi}\ln\frac{\alpha}{1-\alpha}\right\}-NJ\right]^+d\alpha.
\label{eq:30}
\end{eqnarray*}
It is shown that the warrant pricing formula mentioned above depends on $V_t$ and $\sigma$, which are unobservable. To obtain
a pricing formula using observable values, we will make use of the following result.

Let $\beta$ be the stock's elasticity, which gives the percentage change in the stock's value for a percentage change in the
firm's value. Then from a standard result in option pricing theory, we have
\begin{eqnarray}
\beta=\frac{\sigma_s}{\sigma}=\frac{V_t\partial S_t}{S_t\partial V_t}.
\label{eq:31}
\end{eqnarray}
From assumption (iii), we obtain $V_t=NS_t+Mf_w$. Consequently, we have
\begin{eqnarray}
\frac{\partial S_t}{\partial V_t}=\frac{1}{N}-M\frac{\partial f_w}{\partial V_t}.
\label{eq:32}
\end{eqnarray}
Now, from (\ref{eq:31}) and (\ref{eq:32}), it follows that

\begin{eqnarray}
\sigma_s=\frac{\sigma V_t}{S_t}\left[\frac{1}{N}-M\frac{\partial f_w}{\partial V_t}\right].
\label{eq:33}
\end{eqnarray}
\end{proof}

\begin{thm}
The nonlinear system (\ref{eq:26}) has a solution $(\sigma^*, V_t^*)\in (0, +\infty)\times (0, +\infty)$.
\label{thm:7}
\end{thm}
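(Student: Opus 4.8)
The plan is to decouple the two equations in~(\ref{eq:26}): first eliminate the firm value $V_t$ by solving the first equation for $V_t$ as a function of the volatility $\sigma$, and then substitute into the second equation to obtain a single scalar equation in $\sigma$ to which the intermediate value theorem applies. Throughout I treat $f_w=f_w(V_t,\sigma)$ as the function given by the integral of Theorem~\ref{thm:6}; differentiating under the integral sign shows that it is $C^1$ in $(V_t,\sigma)$ wherever the integral is finite. Because the integrand behaves like $(1-\alpha)^{-\sigma\sqrt3(T-t)/\pi}$ as $\alpha\uparrow1$, finiteness requires $\sigma<\sigma_{\max}:=\pi/\bigl(\sqrt3\,(T-t)\bigr)$, so I work on the strip $(0,\sigma_{\max})\times(0,\infty)$. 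Before anything else I would record the structural facts that $f_w>0$, that $f_w$ is strictly increasing in $V_t$ with $f_w(V_t,\sigma)\to0$ as $V_t\downarrow0$ (the payoff bracket is then negative for every $\alpha$), and that $\partial f_w/\partial V_t$ increases from $0$ as $V_t$ grows; the dilution factor $1/(N+Mk)$ is what keeps this derivative controlled.

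For fixed $\sigma$ set $G_\sigma(V):=V-Mf_w(V,\sigma)-NS_t$. Since $S_t>0$ one has $G_\sigma(0^+)=-NS_t<0$; assuming the bound $M\,\partial f_w/\partial V_t<1$, the function $G_\sigma$ is strictly increasing and $G_\sigma(V)\to+\infty$, so the intermediate value theorem produces a unique root $V(\sigma)>0$. As $\partial G_\sigma/\partial V=1-M\,\partial f_w/\partial V_t\neq0$ at this root, the implicit function theorem shows $\sigma\mapsto V(\sigma)$ is continuous (indeed $C^1$). The problem then reduces to finding a zero of
\[
\psi(\sigma):=\frac{\sigma\,V(\sigma)}{S_t}\left(\frac{1}{N}-M\frac{\partial f_w}{\partial V_t}\bigl(V(\sigma),\sigma\bigr)\right)-\sigma_s.
\]

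Next I would apply the intermediate value theorem to $\psi$. As $\sigma\downarrow0$ the explicit prefactor $\sigma$ tends to $0$ while $V(\sigma)$ and the elasticity bracket stay bounded, so $\psi(\sigma)\to-\sigma_s<0$. For a sign reversal I would let $\sigma$ increase: since $f_w$ grows with $\sigma$, so does $V(\sigma)=NS_t+Mf_w$, hence the product $\sigma V(\sigma)$ grows, and I would show that it overwhelms $\sigma_s$ while the bracket $\frac1N-M\,\partial f_w/\partial V_t$ remains positive, giving $\psi(\sigma)>0$ for some admissible $\sigma$. Continuity of $\psi$ then yields a root $\sigma^*$, and $(\sigma^*,V_t^*)$ with $V_t^*=V(\sigma^*)$ solves~(\ref{eq:26}) in the positive quadrant.

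The crux throughout is the control of $\partial f_w/\partial V_t$: monotonicity of $G_\sigma$, unique solvability of the first equation, positivity of the elasticity bracket, and the sign analysis of $\psi$ all rest on the inequality $M\,\partial f_w/\partial V_t<1$ (and on $M\,\partial f_w/\partial V_t<1/N$ for the bracket). Since
\[
\frac{\partial f_w}{\partial V_t}=\frac{ke^{-r(T-t)}}{N+Mk}\int_0^1 e^{\mu(T-t)}\left(\frac{\alpha}{1-\alpha}\right)^{\frac{\sigma\sqrt3\,(T-t)}{\pi}}\mathbf{1}_{\{kV_T^\alpha>NJ\}}\,d\alpha
\]
diverges as $\sigma\uparrow\sigma_{\max}$, the delicate task is to verify this bound on the range of $\sigma$ where the sign change of $\psi$ occurs (or to restrict the parameters so that it holds); this is also why the conclusion is best read as asserting a solution inside the admissible strip on which $f_w$ is finite, rather than literally on all of $(0,+\infty)\times(0,+\infty)$. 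An alternative to the reduction would be a direct two-dimensional fixed-point argument (Brouwer, or the Poincar\'e--Miranda theorem) on a rectangle in the strip, but it requires the same derivative estimates to check the boundary signs, so the one-dimensional reduction seems cleaner.
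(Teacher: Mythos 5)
Your strategy is essentially a sharpened version of the paper's own. The paper likewise uses the first equation of (\ref{eq:26}) to define a curve $\sigma\mapsto g(\sigma)=V_t$ (shown increasing by implicit differentiation of $G$) and the second equation to define another curve $\sigma\mapsto h(\sigma)$, and then concludes that ``the intersection of $g$ and $h$ exists'' merely because $g$ is continuous and increasing while $h$ is continuous and attains every value in $(0,+\infty)$. That last step is a non sequitur: $g(\sigma)=\sigma$ and $h(\sigma)=2\sigma$ have both properties and never meet on $(0,+\infty)$. Your one--dimensional reduction --- solve the first equation for $V(\sigma)$, substitute into the second, and apply the intermediate value theorem to $\psi$ --- is the correct way to organize the argument, and your endpoint computation $\psi(\sigma)\to-\sigma_s<0$ as $\sigma\downarrow 0$ supplies a boundary sign that the paper never produces. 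You also flag two genuine issues the paper passes over in silence: the integral defining $f_w$ diverges once $\sigma\geq\pi/(\sqrt{3}\,(T-t))$ (the integrand grows like $(1-\alpha)^{-\sigma\sqrt{3}(T-t)/\pi}$), so the statement cannot literally hold on all of $(0,+\infty)\times(0,+\infty)$; and every monotonicity claim in the paper's proof (the sign of $dV_t/d\sigma$, the positivity of the elasticity bracket) silently assumes bounds of the form $M\,\partial f_w/\partial V_t<1$, respectively $<1/N$.

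That said, your proposal is not a complete proof either, and you say so yourself: the existence of an admissible $\sigma$ with $\psi(\sigma)>0$ while the bracket $1/N-M\,\partial f_w/\partial V_t$ remains positive is exactly the step you leave as ``I would show,'' and it is not automatic. With the paper's own numerical data ($N=50$, $M=100$, $k=1$) one has $1/N=0.02$, whereas the natural upper bound $M\,\partial f_w/\partial V_t\leq\frac{Mk}{N+Mk}\,e^{(\mu-r)(T-t)}\int_0^1(\alpha/(1-\alpha))^{\sigma\sqrt{3}(T-t)/\pi}\,d\alpha$ is of order $2/3$; positivity of the bracket therefore requires genuinely more information (e.g.\ that the set of in--the--money $\alpha$'s has small measure) and cannot be asserted for free. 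So the honest verdict is that both arguments are incomplete at the same place --- the quantitative control of $\partial f_w/\partial V_t$ --- but yours locates the missing estimate precisely and proposes the right topological skeleton around it, whereas the paper's proof both hides the estimate and draws an unjustified intersection conclusion at the end.
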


\begin{proof}
First, it is clear that for any $\sigma\in (0, +\infty)$, there exists an unique $V_t\in (0, +\infty)$ which satisfies

\begin{eqnarray*}
NS_t=V_t-Mf_w.
\label{eq:34}
\end{eqnarray*}
Define a map $g:\sigma\rightarrow V_t$, which is given by an implicit function
\begin{eqnarray*}
G(\sigma, V_t)=V_t-Mf_w-NS_t.
\label{eq:35}
\end{eqnarray*}

The function $g:\sigma\mapsto V_t$ is increasing since the following inequality holds:

\begin{eqnarray*}
\frac{dV_t}{d\sigma}=-\frac{\partial G/\partial \sigma}{\partial G/\partial V_t}=\frac{M\frac{\partial f_w}{\partial \sigma}}{1-M\frac{\partial f_w}{\partial V_t}}>0.
\label{eq:36}
\end{eqnarray*}
The inequality holds true because the function $f_w$ is an increasing function of $\sigma$.

Second, it is obvious that for any $\sigma\in (0, +\infty)$, there exists an unique $V_t(\sigma)\in (0, +\infty)$, which satisfies

\begin{eqnarray*}
\sigma_s=\frac{\sigma V_t}{S_t}\left[\frac{1}{N}-M\frac{\partial f_w}{\partial V_t}\right].
\label{eq:37}
\end{eqnarray*}

Define a map $h:\sigma\mapsto V_t$, which is given by an implicit function

\begin{eqnarray*}
H(\sigma, V_t)=\frac{\sigma V_t}{S_t}\left[\frac{1}{N}-M\frac{\partial f_w}{\partial V_t}\right]-\sigma_s.
\label{eq:38}
\end{eqnarray*}
Function $h$ is strictly continuous in $V_t$ for all positive $\sigma$. Moreover, for all $\sigma>0$, $\lim_{V_t\rightarrow 0}h(\sigma, V_t)=0$ and $\lim_{V_t\rightarrow +\infty}h(\sigma, V_t)=+\infty$.
Thus we have
\begin{itemize}

\item[(1)] $g$ is one to one, continuous and strictly increasing;
\item[(2)] $h$ is continuous and attains any value in $(0, +\infty)$.
\end{itemize}

Hence the intersection of $g$ and $h$ exits. This completed the proof.

\end{proof}
\begin{exa}
Let $N=50, T-t=3, M=100, k=1, S_t=100, \sigma_s=0.04, J= 50, r=0.04, \mu=0.02$. Then based on approximations $V_t\approx NS_t$ and $\sigma\approx\sigma_s$, the value of equity warrant is
\begin{eqnarray*}
f_w=16.83
\label{eq:39}
\end{eqnarray*}
\end{exa}
\section{Conclusion}\label{sec:4}

The equity warrants pricing was investigated within the framework of uncertainty theory in this paper. Based on the assumption that the firm value follows an uncertain differential equation, the formulas of equity warrants for Liu's uncertain stock model were derived with the method of uncertain calculus.

\bibliographystyle{siam}
\bibliography{../../reference}

\end{document}